\newtheorem{theorem}{Theorem}
\newtheorem{lemma}[theorem]{Lemma}
\newdefinition{definition}[theorem]{Definition}
\newcommand{\F}{{\rm F}} 
\newcommand{\T}{{\rm T}} 
\newcommand{\R}{\ensuremath{R^{\scriptscriptstyle 1/3}}}
\newcommand{\Rddd}{\ensuremath{R_{\scriptscriptstyle 3
      \neq}^{\scriptscriptstyle 1/3}}}
\newcommand{\orn}[2]{\ensuremath{\mathrm{OR}^{
      #2}_{\scriptscriptstyle #1}}}
\newcommand{\nandn}[2]{\ensuremath{\mathrm{NAND}^{#2}_{\scriptscriptstyle
      #1}}}
\newcommand{\oddn}[2]{\ensuremath{\mathrm{ODD}^{#2}_{\scriptscriptstyle
      #1}}}
\newcommand{\evenn}[2]{\ensuremath{\mathrm{EVEN}^{#2}_{\scriptscriptstyle
      #1}}}
\newcommand{\coclone}[3]{\ensuremath{\mathrm{#1}}^{#2}_{#3}}
\newcommand{\deducesto}[1]{\ensuremath{\xhookrightarrow{#1}}}
\newcommand{\Or}{\ensuremath{\textsc{OR}}}
\newcommand{\overbar}[1]{\mkern 2.7mu\overline{\mkern-2.7mu#1\mkern-2.7mu}\mkern 2.7mu}
\newcommand{\clone}[1]{\ensuremath{\mathcal{#1}}}
\newcommand{\cclone}[1]{\ensuremath{\langle #1 \rangle}}
\newcommand{\pcclone}[1]{\ensuremath{\langle #1 \rangle_{\not \exists}}}
\begin{document}
\begin{frontmatter}
\title{Weak Bases of Boolean Co-Clones}
\author{Victor Lagerkvist\corref{cor1}}
\cortext[cor1]{Corresponding author.}
\ead{victor.lagerkvist@liu.se}

\address{Department of Computer and Information Science, Link\"opings
  Universitet, Sweden.}

\date{}


\begin{abstract}
Universal algebra and clone theory have proven to be a useful tool in
the study of constraint satisfaction problems since the complexity, up
to logspace reductions, is determined by the set of polymorphisms of
the constraint language. For classifications where primitive positive
definitions are unsuitable, such as size-preserving reductions, weaker
closure operations may be necessary. In this article we consider
strong partial clones which can be seen as a more fine-grained
framework than Post's lattice where each clone splits into an
interval of strong partial clones.  We investigate these intervals 
and give simple relational descriptions, weak
bases, of the largest elements. The weak bases have
a highly regular form and are in many cases easily relatable to the
smallest members in the intervals, which suggests that the lattice of
strong partial clones is considerably simpler than the full lattice of
partial clones.





\end{abstract}

\begin{keyword}
Computational complexity \sep Clone theory \sep Boolean relations \sep
Constraint satisfaction problems
\end{keyword}

\end{frontmatter}

\section{Introduction}
A set of functions is called a {\em clone} if it (1) is closed under
composition of functions and (2) contain all projection functions of
the form $e^n_i(x_1, \ldots, x_n) = x_i$. Dually, a set of relations
$\Gamma$ is called a relational clone, or a {\em co-clone}, if it
contains every relation $R$ definable through a {\em primitive
  positive} (p.p.) implementation of the form $R(x_1, \ldots, x_n)
\equiv \exists y_1, \ldots, y_m\, . \, R_1(\mathbf{x_1}) \wedge \ldots
R_k({\mathbf{x_k}})$, where each $R_i \in \Gamma \cup \{=\}$ and each
$\mathbf{x_i}$ is a vector over $x_1,\ldots, x_n$, $y_1, \ldots,
y_m$. In the case where $\Gamma$ is finite we say that it is a {\em
  constraint language}. For a set of functions $F$ and a set of
relations $\Gamma$ we use $[F]$ to denote the smallest clone
containing $F$ and $\cclone{\Gamma}$ for the smallest co-clone
containing $\Gamma$. If $\Gamma$ is a set of relations and
$\mathrm{I}\clone{C}$ a co-clone such that $\cclone{\Gamma} =
\mathrm{I}\clone{C}$ then we say that $\Gamma$ is a {\em base} of $\mathrm{I}\clone{C}$.
Ordering clones by set-inclusion yields a lattice
structure which in the Boolean case is completely explicated and known
as {\em Post's lattice} due to Post's seminal
classification~\cite{pos41}. Essentially the lattice determines the
expressive properties of all possible Boolean functions. Due to the
{\em Galois connection} between clones and co-clones the lattice of
Boolean co-clones is anti-isomorphic to Post's lattice and therefore
works as a complete classification of all Boolean languages. Simple
bases for all Boolean co-clones minimal with respect to arity of
relations have been identified by B\"ohler et al.$\,$\cite{bsrv05}
The lattice of Boolean co-clones is visualized in Figure
\ref{figure:postlattice}. The complexity of various computational
problems parameterized by constraint languages such as the {\em
  constraint satisfaction problem} (CSP) have been shown to be
determined up to logspace reducibility by Post's
lattice~\cite{allender2009,jeavons1998}. If one on the other hand is
interested in complexity classifications based on reductions which
preserves the exact complexity of problems, Post's lattice falls
short since even logspace reductions may introduce new
variables which affects the running-time.

\begin{figure}

\label{figure:postlattice}
\includegraphics[scale=0.36]{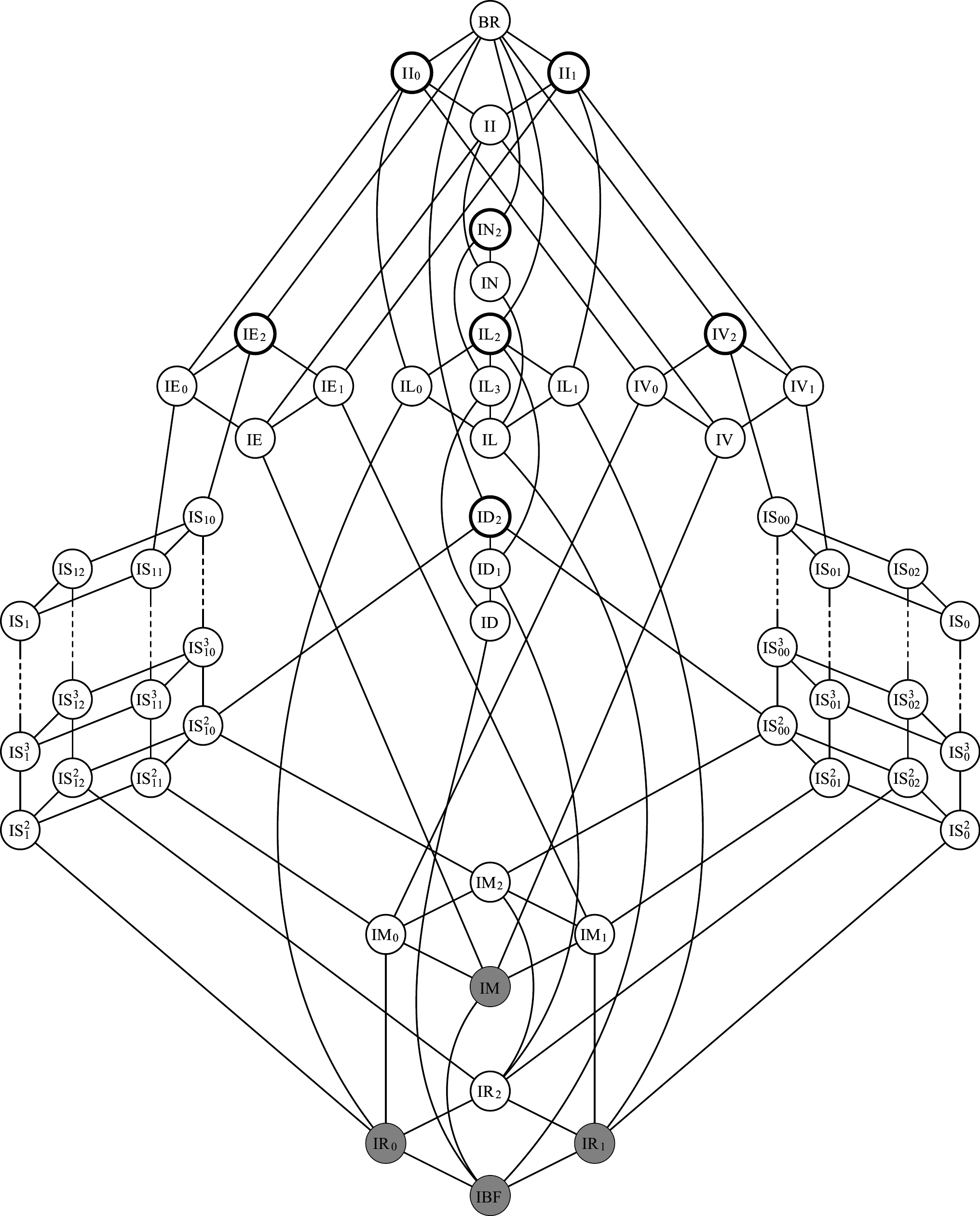}
\caption{The lattice of Boolean co-clones. The co-clones which are
  covered by a single weak partial co-clone are coloured in grey.}
\end{figure}

To remedy this a more fine-grained framework which further separates
constraint languages based on their expressive properties is
necessary. In Jonsson et al.$\,$\cite{Jonsson:etal:soda2013} the
lattice of {\em strong partial clones} is demonstrated to have the
required properties. Hence a classification of the lattice of strong
partial clones similar to that of Post's lattice would provide a
powerful framework for studying exact complexity of CSP and related
problems. We wish to emphasize that even though the lattice of partial
clones is known to be uncountable~\cite{alekseev1994} the same does
not necessarily hold for the lattice of strong partial
clones. Ideally, for each clone $\clone{C}$, one would like to
determine the interval of strong partial clones whose subset of total
functions equal $\clone{C}$. The strong partial clones in this
interval are said to {\em cover} $\clone{C}$. In Creignou et
al.$\,$\cite{creignou2008} relational descriptions known as {\em plain
  bases } of the smallest member of this interval is given. In this
article we give simple relational descriptions known as {\em weak
  bases} of the largest elements in these intervals. Our work builds
on the result of Schnoor and Schnoor~\cite{schnoor2008a,schnoor2008b}
but differs in two important aspects: first, each weak base presented
can in a natural sense be considered to be minimal; second, we present
alternative proofs where Schnoor's and Schnoor's procedure results in
relations which are exponentially larger than the bases given by
B\"ohler et al.$\,$\cite{bsrv05} and Creignou et
al.$\,$\cite{creignou2008}, and are thus also able to cover the
infinite chains in Post's lattice. Due to the Galois connection
between clones and co-clones the weak bases also constitutes the
relations which in a precise sense results in the CSP problems with
the lowest complexity~\cite{Jonsson:etal:soda2013}.  Hence the weak
bases presented in Section \ref{section:weak_bases} are closely
connected to upper bounds of running times for all problems
parameterized by constraint languages.

\section{Preliminaries}
\label{section:preliminaries}
In this section we introduce some basic notions from universal algebra
and clone theory necessary for the construction of weak bases. If $f$
is an $n$-ary function and $R$ a relation with $m$ tuples it is possible
to extend $f$ to operate over tuples from $R$ as follows:
\begin{align*}
f(t_1,\dots ,t_n) = \big(& f(t_1[1], \dots, t_n[1]), \\ & \vdots \\
& f(t_1[m], \dots, t_n[m])\big),
\end{align*}
where $t_i[j]$ denotes the $j$-th argument of the tuple $t_i \in
R$. If $R$ is closed under $f$ we say that $f$ {\em preserves} $R$ or
that $f$ is a {\em polymorphism} of $R$. For a set of functions $F$ we
define Inv($F$) (often abbreviated as I$F$) to be the set of all
relations preserved by all functions in $F$. Dually we define
Pol($\Gamma$) for a set of relations $\Gamma$ to be the set of
polymorphisms to $\Gamma$. It is easy to verify that Pol($\Gamma$)
always form clones and that Inv($F$) always form co-clones. Moreover we have the
{\em Galois connection} between clones and co-clones
normally presented as: 

\begin{theorem}~\cite{BKKR69i,BKKR69ii,Gei68}
  Let $\Gamma$ and $\Delta$ be two sets of relations. Then
  $\cclone{\Gamma} \subseteq \cclone{\Delta}$ if and only if 
  Pol$(\Delta) \subseteq$ Pol$(\Gamma)$.

\end{theorem}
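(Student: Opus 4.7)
The theorem is a Galois duality; I would prove the two implications separately.

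\emph{Forward direction.} Fix $f \in$ Pol$(\Delta)$ and an arbitrary $R \in \cclone{\Gamma}$; since $R \in \cclone{\Delta}$ by assumption, $R$ has a p.p.\ definition over $\Delta \cup \{=\}$, and I would proceed by structural induction on this formula to show $f \in$ Pol$(R)$. Preservation of atomic relations from $\Delta$ holds by hypothesis on $f$; equality is preserved by every function; preservation is closed under conjunction, because sharing a variable equates coordinates and the componentwise action of $f$ commutes with this equating; and preservation is closed under existential quantification, since any tuple in a projection lifts to a witness tuple in the larger relation to which $f$ can be applied componentwise.

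\emph{Backward direction.} I would establish the stronger identity $\cclone{\Gamma} = \,$Inv(Pol$(\Gamma))$; the theorem then follows by applying it to both $\Gamma$ and $\Delta$ together with the monotonicity of Inv. The inclusion $\cclone{\Gamma} \subseteq \,$Inv(Pol$(\Gamma))$ is the forward direction specialised to $\Delta = \Gamma$. For the reverse, take an $n$-ary $R \in \,$Inv(Pol$(\Gamma))$ with tuples $t_1,\ldots,t_m$, arrange these as the rows of an $m \times n$ matrix, and let $c_1,\ldots,c_n \in \{0,1\}^m$ denote its columns. Form the p.p.\ formula
\[
\phi(x_1,\ldots,x_n) \;\equiv\; \bigwedge S(x_{i_1},\ldots,x_{i_k})
\]
ranging over every $S \in \Gamma$ of arity $k$ and every index sequence $(i_1,\ldots,i_k)$ for which the projection of $R$ onto coordinates $(i_1,\ldots,i_k)$ is contained in $S$, together with atoms $x_i = x_j$ whenever $c_i = c_j$. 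A direct row-by-row inspection yields $R \subseteq \mathrm{Sol}(\phi)$, and it remains to prove $\mathrm{Sol}(\phi) \subseteq R$.

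To finish, introduce the orbit $R' = \{(g(c_1),\ldots,g(c_n)) : g \in \mathrm{Pol}(\Gamma)\}$. Taking $g$ to be the $i$-th projection shows $t_i \in R'$, so $R \subseteq R'$; conversely, $R \in \,$Inv(Pol$(\Gamma))$ is closed under every $g \in \mathrm{Pol}(\Gamma)$, and componentwise application $g(t_1,\ldots,t_m) = (g(c_1),\ldots,g(c_n))$ stays in $R$, so $R' \subseteq R$. Hence $R = R'$. The main obstacle is to identify $R'$ with $\mathrm{Sol}(\phi)$: the inclusion $R' \subseteq \mathrm{Sol}(\phi)$ is immediate from the selection rule for the atoms of $\phi$, but the reverse $\mathrm{Sol}(\phi) \subseteq R'$ requires, given a satisfying assignment $(a_1,\ldots,a_n)$ of $\phi$, producing an $m$-ary polymorphism $g$ with $g(c_i) = a_i$. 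The forced partial map $c_i \mapsto a_i$ on $\{c_1,\ldots,c_n\} \subseteq \{0,1\}^m$ must be extended to a total operation preserving every $S \in \Gamma$; the atoms of $\phi$ were chosen precisely so that this extension is possible (they record exactly the local preservation constraints any polymorphism must respect on tuples of columns), and once such $g$ is obtained the chain $R = R' = \mathrm{Sol}(\phi)$ closes the argument.
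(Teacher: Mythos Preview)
The paper does not supply its own proof of this statement; it is quoted as a classical result and attributed to Bodnar\v{c}uk--Kalu\v{z}nin--Kotov--Romov and to Geiger. So there is nothing in the paper to compare against, and I evaluate your argument on its own.

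Your forward direction is correct and standard. The backward direction, however, has a real gap at precisely the step you flag as ``the main obstacle''. Your formula $\phi$ is quantifier-free: it records only those $\Gamma$-constraints that hold among tuples drawn from the \emph{columns} $c_1,\ldots,c_n$ of $R$. But an $m$-ary polymorphism must respect the $\Gamma$-constraints among \emph{all} elements of $\{0,1\}^m$, not just among the $c_i$. The assertion that the partial map $c_i\mapsto a_i$ always extends to a total polymorphism is therefore unjustified, and in fact false. Concretely, take $\Gamma=\{\evenn{}{3}\}$ and $R=\evenn{}{4}$. Then $R\in\cclone{\Gamma}$ via $\exists y\,\big(\evenn{}{3}(x_1,x_2,y)\wedge\evenn{}{3}(x_3,x_4,y)\big)$, yet the four columns of $R$ are pairwise distinct and no triple of them (repetitions allowed) has all its rows in $\evenn{}{3}$; hence your $\phi$ is the empty conjunction and $\mathrm{Sol}(\phi)=\{0,1\}^4\supsetneq R$. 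More conceptually: if your construction worked it would prove $\mathrm{Inv}(\mathrm{Pol}(\Gamma))\subseteq\pcclone{\Gamma}$, i.e.\ $\cclone{\Gamma}=\pcclone{\Gamma}$ for every $\Gamma$ --- exactly the collapse between p.p.\ and q.p.p.\ definability whose failure is the entire premise of the present paper.

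The classical repair (Geiger's argument) is to introduce one variable $y_v$ for every $v\in\{0,1\}^m$, conjoin $S(y_{v_1},\ldots,y_{v_k})$ whenever each row of the matrix with columns $v_1,\ldots,v_k$ lies in $S\in\Gamma$, identify $y_{c_i}$ with $x_i$, and existentially quantify the remaining $y_v$. A satisfying assignment then determines a value for every $v\in\{0,1\}^m$, and the resulting total map is a polymorphism by construction; the existential quantifiers are what your version is missing.
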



To extend these notions to the case of partial clones we need some
additional notation. If $R$ is an $n$-ary Boolean relation and
$\Gamma$ a constraint language we say that $R$ has a {\em
  quantifier-free primitive positive} (q.p.p.) implementation in
$\Gamma$ if $R(x_1, \ldots, x_n) \equiv R_1(\mathbf{x_1}) \wedge
\ldots \wedge R_k({\mathbf{x_k}})$, where each $R_i \in \Gamma \cup \{=\}$
and each $\mathbf{ x_i}$ is a vector over $x_1,\ldots, x_n$. We use
$\pcclone{\Gamma}$ to denote the smallest set of relations closed
under q.p.p.\ definability. If I$\clone{C} =
\pcclone{\mathrm{I}\clone{C}}$ then we say that I$\clone{C}$ is a {\em
  weak partial co-clone}. We use the term weak partial co-clone to
avoid confusion with partial co-clones used in other contexts (see
Chapter 20.3 in Lau~\cite{lau2006}).
To get a corresponding concept on the functional
side we extend the previous definition of a polymorphism and say that
a partial function $f$ is a {\em partial polymorphism} to a relation
$R$ if $R$ is closed under $f$ for every sequence of tuples for which
$f$ is defined. A set of partial functions $\clone{C}$ is said to be a {\em partial
clone} if it contain all projection functions and is closed under
composition of functions. If $\clone{C}$ is a partial clone we say that it is
{\em strong} if for every $f \in \clone{C}$, $\clone{C}$ also contain
all partial subfunctions $g$ of $f$ which agrees with $f$ for all
values that they are defined.
By pPol($\Gamma$) we denote the set of partial polymorphisms to the set of
relations $\Gamma$. Obviously sets of the form pPol($\Gamma$) always
form strong partial clones and again we have a Galois connection
between clones and co-clones.

\begin{theorem}~\cite{BKKR69i,BKKR69ii,romov1981}
  Let $\Gamma$ and $\Delta$ be two sets of relations. Then
  $\pcclone{\Gamma} \subseteq \pcclone{\Delta}$ if and only if 
  pPol$(\Delta) \subseteq$ pPol$(\Gamma)$.

\end{theorem}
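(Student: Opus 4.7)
The plan is to adapt the standard proof of the Galois connection between clones and co-clones to the partial setting. The crucial observation is that q.p.p.\ definitions are quantifier-free, and this is precisely why partial polymorphisms (which need not respect existential projection) form the correct functional counterpart to $\pcclone{\cdot}$.

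For the implication $\pcclone{\Gamma} \subseteq \pcclone{\Delta} \Rightarrow$ pPol$(\Delta) \subseteq$ pPol$(\Gamma)$, I would take any $f \in$ pPol$(\Delta)$ and any $R \in \Gamma$, unfold the q.p.p.\ definition $R(x_1,\ldots,x_n) \equiv R_1(\mathbf{x_1}) \wedge \ldots \wedge R_k(\mathbf{x_k})$ inherited from $R \in \pcclone{\Delta}$, and observe that preservation of each conjunct (either a member of $\Delta$, or an equality atom, both preserved by $f$) propagates through the conjunction since no existential step is present. This direction is a direct unwinding of definitions.

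The reverse direction carries the weight of the proof. Fix $R \in \Gamma$ of arity $n$ with tuples $t_1,\ldots,t_m$ and construct a candidate q.p.p.\ definition $\Phi_R$ over $\Delta \cup \{=\}$ whose conjuncts are exactly those atoms $S(x_{i_1},\ldots,x_{i_k})$ such that the column selection $(t_j[i_1],\ldots,t_j[i_k])$ lies in $S$ for every $j$; in other words, every atomic constraint over $\Delta \cup \{=\}$ that already holds on all rows of the tuple-matrix of $R$. Let $R'$ be the relation defined by $\Phi_R$. The inclusion $R \subseteq R'$ is immediate by construction, and the task reduces to showing $R' \subseteq R$.

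For this I would use the standard indicator construction. Given any $(a_1,\ldots,a_n) \in R'$, define a partial $m$-ary Boolean operation $f$ whose domain is the set of $n$ column vectors of the tuple-matrix of $R$, setting $f(t_1[i],\ldots,t_m[i]) := a_i$. Equal columns force equal outputs because the corresponding equality conjunct appears in $\Phi_R$, so $f$ is a well-defined partial function. To verify $f \in$ pPol$(\Delta)$, note that any $m$ tuples from $S \in \Delta$ on which $f$ is componentwise defined must arise by transposing some column selection $i_1,\ldots,i_k$ of the matrix; by construction this selection satisfies $S$ on every row, so the atom $S(x_{i_1},\ldots,x_{i_k})$ is a conjunct of $\Phi_R$, forcing $f(s_1,\ldots,s_m) = (a_{i_1},\ldots,a_{i_k}) \in S$. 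The hypothesis then yields $f \in$ pPol$(\Gamma) \subseteq$ pPol$(R)$, and applying $f$ to the tuple-matrix of $R$ itself produces exactly $(a_1,\ldots,a_n)$, which consequently lies in $R$. The main obstacle is bookkeeping arities correctly — the indicator $f$ has arity equal to the number of tuples of $R$ rather than to the arity of $R$ — and ensuring that equality conjuncts make $f$ a genuine partial operation before any Galois-theoretic argument can be invoked.
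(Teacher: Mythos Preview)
The paper does not give its own proof of this theorem; it is stated as a classical result and attributed to Bodnarchuk--Kalu\v{z}nin--Kotov--Romov and Romov via a citation. There is therefore no in-paper argument to compare against. Your proposal is precisely the standard proof of the Galois connection between $\mathrm{pPol}$ and $\pcclone{\cdot}$, and it is correct: the ``indicator'' partial operation built from the columns of the tuple matrix is exactly the device Romov uses, and the absence of existential quantifiers is what lets a merely partial $f$ preserve $R$.

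One small point worth tightening: your $\Phi_R$ is defined as the conjunction of \emph{all} atoms over $\Delta\cup\{=\}$ satisfied by the rows of $R$, which is a priori infinite when $\Delta$ is infinite. Over the Boolean domain this is harmless, since there are only finitely many $n$-ary relations and hence a finite subconjunction already defines the same $R'$; you should say this explicitly so that the resulting formula is a genuine q.p.p.\ definition.
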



For a co-clone I$\clone{C}$ we define $\clone{I}(\mathrm{I}\clone{C})
= \{\mathrm{I}\clone{D} \mid \mathrm{I}\clone{D} =
\pcclone{\mathrm{I}\clone{D}}$ and $\cclone{\mathrm{I}\clone{D}} =
\mathrm{I}\clone{C}\}$. In other words
$\clone{I}(\mathrm{I}\clone{C})$ is the interval of all weak partial
co-clones occurring inside of I$\clone{C}$. Let
$\clone{I}_{\cap}(\mathrm{I}\clone{C}) = \bigcap_{\mathrm{I}\clone{D} \in
  \clone{I}(\mathrm{I}\clone{C})}\mathrm{I}\clone{D}$. To be
consistent with Schnoor's and Schnoor's~\cite{schnoor2008a} notation
which is defined in terms of clones instead of co-clones we also
define $\clone{I}_{\cup}(\clone{C}) = \bigcup_{\mathrm{I}\clone{D} \in
  \clone{I}(\mathrm{I}\clone{C})}\mathrm{pPol}(\mathrm{I}\clone{D})$. Obviously
$\clone{I}_{\cup}(\clone{C})$ is the union of all strong partial
clones covering $\clone{C}$, from which it follows that
pPol($\clone{I}_{\cap}(\mathrm{I}\clone{C})) = \clone{I}_{\cup}(\clone{C})$.

\begin{definition}
  Let $\clone{C}$ be a clone. A constraint language $\Gamma$ is a {\em
    weak base} of I$\clone{C}$ if pPol($\Gamma$) $=$
  $\clone{I}_{\cup}(\clone{C})$.
\end{definition}

Due to the Galois connection between strong partial clones and weak partial
co-clones a weak base for a co-clone I$\clone{C}$ therefore results in
smallest element in $\clone{I}(\mathrm{I}\clone{C})$. The
following theorem is immediate from the definition and the fact
that pPol($\clone{I}_{\cap}(\mathrm{I}\clone{C})) =
\clone{I}_{\cup}(\clone{C})$.

\begin{theorem} [\cite{schnoor2008a}] \label{theorem:weak_base}
  Let $\clone{C}$ be a clone and $\Gamma$ be a weak base of
  I$\clone{C}$. Then, for any base $\Gamma'$ of I$\clone{C}$,
  it holds that $\Gamma \subseteq \pcclone{\Gamma'}$.
\end{theorem}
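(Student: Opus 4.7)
The plan is to unfold the definitions and invoke the Galois connection once, after which the claim reduces to showing that $\pcclone{\Gamma} = \clone{I}_{\cap}(\mathrm{I}\clone{C})$ and that for any base $\Gamma'$ of $\mathrm{I}\clone{C}$ the weak partial co-clone $\pcclone{\Gamma'}$ lies in the interval $\clone{I}(\mathrm{I}\clone{C})$. Chaining these two facts with the inclusion $\Gamma \subseteq \pcclone{\Gamma}$ then yields the theorem.

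First I would translate the weak-base hypothesis across the Galois connection for strong partial clones and weak partial co-clones. By definition $\mathrm{pPol}(\Gamma) = \clone{I}_{\cup}(\clone{C})$, and the identity $\mathrm{pPol}(\clone{I}_{\cap}(\mathrm{I}\clone{C})) = \clone{I}_{\cup}(\clone{C})$ recorded immediately before the theorem gives $\mathrm{pPol}(\Gamma) = \mathrm{pPol}(\clone{I}_{\cap}(\mathrm{I}\clone{C}))$. Applying the Galois connection in both directions yields $\pcclone{\Gamma} = \pcclone{\clone{I}_{\cap}(\mathrm{I}\clone{C})}$, and since $\clone{I}_{\cap}(\mathrm{I}\clone{C})$ is an intersection of weak partial co-clones it is itself closed under q.p.p.\ definitions, so the right-hand side collapses to $\clone{I}_{\cap}(\mathrm{I}\clone{C})$ itself. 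Hence $\Gamma \subseteq \pcclone{\Gamma} = \clone{I}_{\cap}(\mathrm{I}\clone{C})$.

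Next I would verify that $\pcclone{\Gamma'} \in \clone{I}(\mathrm{I}\clone{C})$. That $\pcclone{\Gamma'}$ is a weak partial co-clone is automatic from its definition, so the only remaining requirement is $\cclone{\pcclone{\Gamma'}} = \mathrm{I}\clone{C}$. Since q.p.p.\ implementations are p.p.\ implementations without existential quantifiers, $\Gamma' \subseteq \pcclone{\Gamma'} \subseteq \cclone{\Gamma'} = \mathrm{I}\clone{C}$, and taking co-clone closures throughout this chain forces equality at every step. Membership then implies $\clone{I}_{\cap}(\mathrm{I}\clone{C}) \subseteq \pcclone{\Gamma'}$ directly from the definition of the intersection, and combining with the previous paragraph gives $\Gamma \subseteq \pcclone{\Gamma'}$.

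The only genuinely nontrivial ingredient is the Galois step that converts equality of partial polymorphism sets into equality of q.p.p.-closures, but this is already available as the second theorem in the excerpt, so no real obstacle remains; everything else is straightforward definition chasing, which is presumably why the author announces the proof as immediate.
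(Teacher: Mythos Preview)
Your argument is correct and is exactly the unfolding of what the paper calls ``immediate from the definition and the fact that $\mathrm{pPol}(\clone{I}_{\cap}(\mathrm{I}\clone{C})) = \clone{I}_{\cup}(\clone{C})$''; the paper gives no further details beyond that sentence, and your two steps (identifying $\pcclone{\Gamma}$ with $\clone{I}_{\cap}(\mathrm{I}\clone{C})$ via the Galois connection, then observing $\pcclone{\Gamma'}\in\clone{I}(\mathrm{I}\clone{C})$) are precisely how one makes that immediacy explicit.
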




If $R$ is an $n$-ary relation with $m = |R|$ elements we let the
matrix representation of $R$ be the $m \times n$-matrix containing the
tuples of $R$ as rows stored in lexicographical order. Note that the
ordering is only relevant to ensure that the representation is
unambiguous. Given a natural number $n$ the $2^n$-ary relation
COLS$^n$ is the relation which contains all natural numbers from $0$
to $2^n - 1$ as columns in the matrix representation. 
For any clone $\clone{C}$ and relation $R$ we define $\clone{C}(R)$ to
be the relation $\bigcap_{R' \in \mathrm{I}\clone{C}, R \subseteq R'}
R'$, i.e. the smallest extension of $R$ which is preserved under every
function in $\clone{C}$. For a relation $R$ we say that the co-clone
$\cclone{R}$ has {\em core-size} $s$ if there is a relation $R'$ such
that $|R'| = s$ and $R = (\mathrm{Pol}(R))(R')$. Minimal core-sizes
for all Boolean co-clones have been identified by
Schnoor~\cite{schnoor2008b}. We are now ready to state Schnoor's and
Schnoor's~\cite{schnoor2008a} main result which effectively gives a
weak base for any co-clone with a finite core-size.


\begin{theorem}[\cite{schnoor2008a}] \label{theorem:ccols_weak_base}
  Let $\clone{C}$ be a clone and $s$ be a core-size of
  I$\clone{C}$. Then the relation $\clone{C}$(COLS$^s$) is a weak
  base of I$\clone{C}$.
\end{theorem}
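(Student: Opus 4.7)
The plan is to reduce the theorem, via the Galois connection between strong partial clones and weak partial co-clones, to proving the relational equality $\pcclone{\clone{C}(\mathrm{COLS}^s)} = \clone{I}_{\cap}(\mathrm{I}\clone{C})$. I would establish this by two separate inclusions: (a) showing $\cclone{\clone{C}(\mathrm{COLS}^s)} = \mathrm{I}\clone{C}$, so that $\pcclone{\clone{C}(\mathrm{COLS}^s)}$ is itself an element of $\clone{I}(\mathrm{I}\clone{C})$ and thus contains $\clone{I}_{\cap}(\mathrm{I}\clone{C})$; and (b) showing $\clone{C}(\mathrm{COLS}^s) \in \mathrm{I}\clone{D}$ for every $\mathrm{I}\clone{D} \in \clone{I}(\mathrm{I}\clone{C})$, which after q.p.p.\ closure yields the reverse inclusion.

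For (a), the direction $\cclone{\clone{C}(\mathrm{COLS}^s)} \subseteq \mathrm{I}\clone{C}$ is immediate from the fact that $\clone{C}(\mathrm{COLS}^s)$ is $\clone{C}$-invariant by construction. For the reverse inclusion I would invoke the core-size hypothesis to fix a generator $R$ of $\mathrm{I}\clone{C}$ with an $s$-element subrelation $R' \subseteq R$ satisfying $\mathrm{Pol}(R)(R') = R$; write $r_1, \ldots, r_s$ for the tuples of $R'$ and $c_1, \ldots, c_n$ for its columns, regarded as elements of $\{0,1\}^s$. I would then argue that the p.p.\ coordinate projection of $\clone{C}(\mathrm{COLS}^s)$ onto the $n$ coordinates of $\mathrm{COLS}^s$ indexed by $c_1, \ldots, c_n$ equals $R$ itself: identifying each tuple of $\clone{C}(\mathrm{COLS}^s)$ with the arity-$s$ operation $\hat f \in \clone{C}$ whose function table it is, the projected tuple $(\hat f(c_1), \ldots, \hat f(c_n))$ is exactly $\hat f(r_1, \ldots, r_s)$ computed coordinate-wise, and letting $\hat f$ range over all arity-$s$ operations of $\clone{C}$ recovers $\clone{C}(R') = R$. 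Hence $R \in \cclone{\clone{C}(\mathrm{COLS}^s)}$, and since $R$ generates $\mathrm{I}\clone{C}$ we obtain $\mathrm{I}\clone{C} \subseteq \cclone{\clone{C}(\mathrm{COLS}^s)}$.

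For (b), I would pass through the Galois connection and instead show $\mathrm{pPol}(\Gamma') \subseteq \mathrm{pPol}(\clone{C}(\mathrm{COLS}^s))$ for an arbitrary base $\Gamma'$ of $\mathrm{I}\clone{C}$. Using the same function-table identification, $k$ tuples $g_1, \ldots, g_k \in \clone{C}(\mathrm{COLS}^s)$ correspond to operations $\hat g_1, \ldots, \hat g_k \in \clone{C}$, and the hypothesis that $f(g_1, \ldots, g_k)$ is coordinate-wise defined says precisely that
\[
\hat h(\mathbf x) \;=\; f\bigl(\hat g_1(\mathbf x), \ldots, \hat g_k(\mathbf x)\bigr)
\]
is a totally defined operation on $\{0,1\}^s$. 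To conclude $\hat h \in \clone{C} = \mathrm{Pol}(\Gamma')$, I would take any $R'' \in \Gamma'$ and any $s$-tuple $r_1, \ldots, r_s$ from $R''$ and set $v_j = \hat g_j(r_1, \ldots, r_s)$; each $v_j$ lies in $R''$ since $\hat g_j \in \clone{C} \subseteq \mathrm{Pol}(R'')$. The total-definedness of $\hat h$ then guarantees that $f$ is coordinate-wise defined on $v_1, \ldots, v_k$, so $f \in \mathrm{pPol}(R'')$ yields $\hat h(r_1, \ldots, r_s) = f(v_1, \ldots, v_k) \in R''$, i.e., $\hat h$ preserves $R''$.

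The step I expect to be the most delicate is (b), because partiality of $f$ must be tracked through two different ``defined'' conditions: the hypothesis only asserts definedness of $f$ on the $2^s$ columns of the $g_j$, whereas the preservation argument needs definedness of $f$ on the possibly different intermediate tuples $v_1, \ldots, v_k$ obtained by applying $\hat g_j$ to $R''$-tuples. The function-table encoding of $\clone{C}(\mathrm{COLS}^s)$ is precisely what makes the second follow from the first, uniformly over all bases $\Gamma'$ including the infinite ones needed for the infinite chains in Post's lattice.
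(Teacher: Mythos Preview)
The paper does not give its own proof of this theorem; it is quoted from Schnoor and Schnoor~\cite{schnoor2008a} and used as a black box. There is therefore nothing in the present paper to compare your argument against.

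That said, your proposal is correct and is essentially the argument of~\cite{schnoor2008a}. The key identification of tuples of $\clone{C}(\mathrm{COLS}^s)$ with function tables of the $s$-ary part of $\clone{C}$ is exactly the device Schnoor and Schnoor use, and both of your inclusions go through as written. One small clarification regarding the point you flagged as delicate in~(b): the intermediate column $(v_1[i],\ldots,v_k[i])$ is not merely ``compatible'' with the $2^s$ columns on which $f$ is assumed defined, it is literally one of them, namely the column indexed by $\mathbf{x}=(r_1[i],\ldots,r_s[i])\in\{0,1\}^s$. So the definedness transfer is an identity rather than an implication, and there is no hidden gap. Also note that in~(b) you may simply take $\Gamma'=\mathrm{I}\clone{C}$ itself; restricting to finitely generated bases is unnecessary, and the remark about infinite chains is not needed for this theorem (it matters later in the paper, where the $\clone{C}(\mathrm{COLS}^s)$ construction is bypassed precisely because of its exponential arity).
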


The disadvantage of the theorem is that relations of the form
$\clone{C}$(COLS$^s$) have exponential arity with respect to the
core-size. We therefore introduce another measurement of minimality
which ensures that a given relation is indeed minimal with respect to
cardinality. A relation $R$ is said to be {\em irredundant} if there
are no duplicate rows in the matrix representation.

\begin{definition}
  A relation $R$ is {\em minimal} if it is irredundant and there is no
  $R' \subset R$ such that $\cclone{R} = \cclone{R'}$.
\end{definition}

Minimal weak bases have the property that they can be implemented
without the use of the equality operator. If we let $\langle \cdot
\rangle_{\not \exists \not =}$ denote the closure of
q.p.p.\ definitions without equality we therefore get the following
theorem.

\begin{theorem}[\cite{schnoor2008a}] \label{theorem:weak_base3}
  Let $\clone{C}$ be a clone and $\Gamma$ be a minimal weak base of
  I$\clone{C}$. Then, for any base $\Gamma'$ of I$\clone{C}$,
  it holds that $\Gamma \subseteq \langle \Gamma' \rangle_{\not
    \exists \not =}$.
\end{theorem}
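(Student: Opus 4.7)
Fix $R \in \Gamma$. By Theorem~\ref{theorem:weak_base}, $R \in \pcclone{\Gamma'}$, so there is a q.p.p.\ implementation
\[
R(x_1,\ldots,x_n) \equiv \phi_0(x_1,\ldots,x_n) \wedge E,
\]
where $\phi_0$ is a conjunction of atoms $R_i(\mathbf{x_i})$ with $R_i \in \Gamma'$ and $E$ is a conjunction of equality atoms of the form $x_a = x_b$ between external variables (q.p.p.\ admits no existential variables, so every equality atom is on the $x_1,\ldots,x_n$). The plan is to show that $E$ can always be taken empty, which immediately gives $R \in \langle \Gamma' \rangle_{\not \exists \not =}$.

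Let $P$ be the $n$-ary relation defined by $\phi_0$ alone. Then $R \subseteq P$ and $P \in \pcclone{\Gamma'}$. Because $\cclone{\Gamma'} = \mathrm{I}\clone{C} = \cclone{R}$ and $R$ is in turn q.p.p.-definable from $P$ by conjoining the equalities in $E$, one easily deduces $\cclone{P} = \cclone{R}$, and similarly $\pcclone{P} = \pcclone{R}$. It therefore suffices to prove $R = P$: in that case $\phi_0$ itself is an equality-free q.p.p.\ implementation of $R$ over $\Gamma'$, witnessing the desired $R \in \langle \Gamma' \rangle_{\not \exists \not =}$.

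To establish $R = P$, I would argue by contradiction. Suppose $R \subsetneq P$ and pick $t \in P \setminus R$; such a $t$ must violate at least one equality in $E$, which is to say that columns $a$ and $b$ of the matrix of $R$ coincide while the corresponding columns of $P$ do not. The aim is then to combine the minimality of $R$ with the fact that $\Gamma$ is a \emph{weak} base (so $\pcclone{R}$ is the smallest weak partial co-clone in $\clone{I}(\mathrm{I}\clone{C})$) in order to exhibit a proper subset $R' \subsetneq R$ satisfying $\cclone{R'} = \cclone{R}$, contradicting the minimality of $R$ and forcing $R = P$.

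The hardest step is this last one: the definition of minimality refers only to subsets of $R$ of the same arity, whereas the natural consequence of $E$ being non-trivial is the duplication of columns in $R$, which is an arity-changing phenomenon. A clean route around the obstacle is to compare any minimal weak base with the canonical weak base $\clone{C}(\mathrm{COLS}^s)$ of Theorem~\ref{theorem:ccols_weak_base}: because $\mathrm{COLS}^s$ has pairwise distinct columns and the closure under $\clone{C}$ preserves column-distinctness, $\clone{C}(\mathrm{COLS}^s)$ also has pairwise distinct columns; using $\pcclone{R} = \pcclone{\clone{C}(\mathrm{COLS}^s)}$, one can transport this property to $R$, forcing $E$ to be empty and completing the proof.
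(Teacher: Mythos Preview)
The paper does not actually prove Theorem~\ref{theorem:weak_base3}; it is quoted from Schnoor and Schnoor~\cite{schnoor2008a} without argument. So there is no in-paper proof to compare against, and the relevant question is whether your plan stands on its own.

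Your opening is fine: from Theorem~\ref{theorem:weak_base} you get a q.p.p.\ definition $R(x_1,\ldots,x_n)\equiv\phi_0\wedge E$, and the task reduces to showing the equality atoms in $E$ are dispensable. The problem is the final step. You correctly note that a nontrivial atom $x_a=x_b$ in $E$ forces columns $a$ and $b$ of $R$ to coincide, and then try to derive a contradiction by ``transporting'' the pairwise-distinct-columns property of $\clone{C}(\mathrm{COLS}^s)$ to $R$ via $\pcclone{R}=\pcclone{\clone{C}(\mathrm{COLS}^s)}$. That transport does not go through: q.p.p.-equivalence does not preserve column-distinctness. For a trivial counterexample, take $\clone{C}(\mathrm{COLS}^s)$ and duplicate one coordinate; the resulting relation is q.p.p.-equivalent to the original but has a repeated column. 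Nothing you have assumed so far blocks this, so the argument stalls exactly where you flag it as ``the hardest step''.

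The fix is already in the paper, hidden behind what is evidently a typo. The definition of \emph{irredundant} says ``no duplicate rows'', but the paper's own use of the $\deducesto{\mathrm{irr}}$ rule (see the proof of the lemma on admissible operations: ``$R'$ can be obtained by identifying all variables that are equal'') shows that the intended meaning is \emph{no duplicate columns}, which is also Schnoor and Schnoor's definition. With that reading the argument is immediate: a minimal weak base is in particular irredundant, hence has pairwise distinct columns; any atom $x_a=x_b$ in $E$ with $a\neq b$ would force columns $a$ and $b$ of $R$ to coincide, a contradiction. Thus $E$ may be taken empty and $R\in\langle\Gamma'\rangle_{\not\exists\not=}$. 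Your detour through $\clone{C}(\mathrm{COLS}^s)$, the claim $\pcclone{P}=\pcclone{R}$, and the attempt to produce a proper subset $R'\subsetneq R$ are all unnecessary.
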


Hence minimal weak bases give the largest possible expressibility
results and are applicable for problems where the equality operator is
not permissable, e.g. counting CSP, where the number of solutions can
be increased by an exponential factor~\cite{schnoor2008a}.

\section{Minimal weak bases of all Boolean co-clones}
\label{section:weak_bases}


In this section we proceed by giving minimal weak bases for all Boolean
co-clones with finite core-size. The results are presented in Table
\ref{table:weak_bases}. Each line in the table consists of a co-clone,
its minimal core-size and a minimal weak base. As convention we use
normal Boolean connectives to represent relations whenever this
promotes readability. For example $x_1x_2$ denotes the relation $\{(1,1)\}$
while $x_1 \neq x_2$ denotes the relation $\{(0,1),(1,0)\}$. We
use $\F$ for the relation $\{(0)\}$ and $\T$ for the relation
$\{(1)\}$. The relations $\orn{}{n}$ and $\nandn{}{n}$ are $n$-ary
{\em or}
and {\em nand}. $\evenn{}{n}$ is the $n$-ary relation which holds if the sum of its
arguments is even, and conversely for $\oddn{}{n}$. By $\R$ we denote
the 3-ary relation $\{(0,0,1), (0,1,0), (1,0,0)\}$. If $R$ is an
$n$-ary relation we often use $R_{m \neq}$ to denote the $(n+m)$-ary
relation defined as $R_{m \neq}(x_1, \ldots, x_{n + m}) \equiv R(x_1,
\ldots, x_n) \wedge (x_1 \neq x_{n+1}) \wedge \ldots \wedge (x_n \neq
x_{n + m})$. Variables are named $x_1, \ldots, x_n$ or $x$ except when
they occur in $\F$ or $\T$ in which case they are named $c_0$ and
$c_1$ respectively to explicate that they are in essence constant
values.

\begin{table*}  \scriptsize
\caption{Weak bases for all Boolean co-clones with a finite base}
\label{table:weak_bases}
\begin{tabularx}{\textwidth}{l c l c c}
  \hline
  Co-clone & Core-size & Weak base \\
  \hline
  $\coclone{IBF}{}{}$ & 1 & Eq$(x_1,x_2)$ \\
  $\coclone{IR}{}{0}$ & 1 & $\F(c_0)$  \\
  $\coclone{IR}{}{1}$ & 1 & $\T(c_1)$  \\
  $\coclone{IR}{}{2}$ & 1 & $\F(c_0) \wedge \T(c_1)$  \\

  $\coclone{IM}{}{}$ & 1 & $(x_1 \rightarrow x_2)$  \\
  $\coclone{IM}{}{0}$ & 2 & $(x_1 \rightarrow x_2) \wedge \F(c_0)$ \\ 

  $\coclone{IM}{}{1}$ & 2 & $(x_1 \rightarrow x_2) \wedge \T(c_1)$ \\

  $\coclone{IM}{}{2}$ & 3 & $(x_1 \rightarrow x_2) \wedge \F(c_0)
  \wedge \T(c_1)$  \\

  $\coclone{IS}{n}{0}, n \geq 2 $ & $n$ & $\orn{}{n}(x_1, \ldots, x_n) \wedge \T(c_1)$  \\


  $\coclone{IS}{n}{02}, n \geq 2$ & $n$ & $\orn{}{n}(x_1, \ldots, x_n) \wedge
  \F(c_0) \wedge \T(c_1)$ \\
  $\coclone{IS}{n}{01}, n \geq 2$ & $n$ & $\orn{}{n}(x_1, \ldots, x_n)
  \wedge (x \rightarrow x_1 \cdots x_n)
   \wedge \T(c_1)$  \\
  $\coclone{IS}{n}{00}, n \geq 2$ & $\max(3, n)$ & $\orn{}{n}(x_1, \ldots, x_n)
  \wedge (x \rightarrow x_1 \cdots x_n)
    \wedge \F(c_0) \wedge \T(c_1)$  \\
  $\coclone{IS}{n}{1}, n \geq 2$ & $n$ & $\nandn{}{n}(x_1, \ldots, x_n) \wedge
  \F(c_0)$  \\
  $\coclone{IS}{n}{12}, n \geq 2 $ & $n$ & $\nandn{}{n}(x_1, \ldots, x_n) \wedge
  \F(c_0) \wedge \T(c_1)$  \\
  $\coclone{IS}{n}{11}, n \geq 2$ & $n$ & $\nandn{}{n}(x_1, \ldots, x_n)
  \wedge (x \rightarrow x_1 \cdots x_n)
   \wedge \F(c_0)$  \\


  $\coclone{IS}{n}{10}, n \geq 2$ & $\max(3, n)$ & $\nandn{}{n}(x_1, \ldots, x_n)
  \wedge (x \rightarrow x_1 \cdots x_n)
   \wedge \F(c_0) \wedge \T(c_1)$  \\

  $\coclone{ID}{}{}$ & 1 & $(x_1 \neq x_2)$ \\
  $\coclone{ID}{}{1}$ & 2 & $(x_1 \neq x_2) \wedge \F(c_0) \wedge \T(c_1)$  \\
  $\coclone{ID}{}{2}$ & 3 & $\orn{2 \neq}{2}(x_1,x_2,x_3,x_4) \wedge \F(c_0) \wedge \T(c_1)$  \\

  $\coclone{IL}{}{}$  & 2 & $\evenn{}{4}(x_1,x_2,x_3,x_4)$ \\
  $\coclone{IL}{}{0}$ & 2 & $\evenn{}{3}(x_1,x_2,x_3) \wedge \F(c_0)$  \\
  $\coclone{IL}{}{1}$ & 2 & $\oddn{}{3}(x_1,x_2,x_3) \wedge \T(c_1)$  \\
  $\coclone{IL}{}{2}$ & 3 & $\evenn{3 \neq}{3}(x_1,\ldots,x_6) \wedge \F(c_0) \wedge \T(c_1)$  \\
  $\coclone{IL}{}{3}$ & 3 & $\evenn{4 \neq}{4}(x_1,\ldots,x_8)$  \\

  $\coclone{IV}{}{}$ & 2 & $(\overbar{x_1} \leftrightarrow
\overbar{x_2}\overbar{x_3}) \wedge (\overbar{x_2} \vee \overbar{x_3}
\rightarrow \overbar{x_4})$ \\
  $\coclone{IV}{}{0}$ & 2 & $(\overbar{x_1} \leftrightarrow \overbar{x_2}\overbar{x_3}) \wedge \F(c_0)$  \\
  $\coclone{IV}{}{1}$ & 3 & $(\overbar{x_1} \leftrightarrow
\overbar{x_2}\overbar{x_3}) \wedge (\overbar{x_2} \vee \overbar{x_3}
\rightarrow \overbar{x_4}) \wedge \T(c_1)$  \\
  $\coclone{IV}{}{2}$ & 3 & $(\overbar{x_1} \leftrightarrow
\overbar{x_2}\overbar{x_3}) \wedge \F(c_0) \wedge \T(c_1)$  \\

  $\coclone{IE}{}{}$ & 2 & $(x_1 \leftrightarrow x_2x_3) \wedge (x_2 \vee x_3
\rightarrow x_4)$ \\
  $\coclone{IE}{}{0}$ & 3 & $(x_1 \leftrightarrow x_2x_3) \wedge (x_2 \vee x_3
\rightarrow x_4) \wedge \F(c_0)$  \\
  $\coclone{IE}{}{1}$ & 2 & $(x_1 \leftrightarrow x_2x_3) \wedge \T(c_1)$  \\
  $\coclone{IE}{}{2}$ & 3 & $(x_1 \leftrightarrow x_2x_3) \wedge \F(c_0) \wedge \T(c_1)$  \\

  $\coclone{IN}{}{}$ & 2 & $\evenn{}{4}(x_1,x_2,x_3,x_4) \wedge x_1x_4
\leftrightarrow x_2x_3$ \\
  $\coclone{IN}{}{2}$ & 3 & $\evenn{4 \neq}{4}(x_1,\ldots,x_8) \wedge x_1x_4
\leftrightarrow x_2x_3$  \\  

  $\coclone{II}{}{}$ & 2 & $(x_1 \leftrightarrow x_2x_3) \wedge (\overbar{x_4}
\leftrightarrow \overbar{x_2}\overbar{x_3})$ \\
  $\coclone{II}{}{0}$ & 2 & $(\overbar{x_1} \vee \overbar{x_2}) \wedge (\overbar{x_1}\overbar{x_2} \leftrightarrow \overbar{x_3}) \wedge \F(c_0)$  \\
  $\coclone{II}{}{1}$ & 2 & $(x_1 \vee x_2) \wedge (x_1x_2 \leftrightarrow x_3) \wedge \T(c_1)$  \\
  $\coclone{BR}{}{}$ & 3 & $\Rddd(x_1,\ldots,x_6) \wedge \F(c_0)
\wedge \T(c_1)$ \\

  \hline
\end{tabularx}
\end{table*}

For the co-clones $\coclone{IR}{}{2}$, $\coclone{IM}{}{}$,
$\coclone{ID}{}{}$, $\coclone{ID}{}{1}$, $\coclone{IL}{}{}$,
$\coclone{IL}{}{0}$, $\coclone{IL}{}{1}$, $\coclone{IL}{}{2}$,
$\coclone{IL}{}{3}$, $\coclone{IV}{}{}$, $\coclone{IV}{}{0}$,
$\coclone{IE}{}{}$, $\coclone{IE}{}{1}$, $\coclone{IN}{}{}$,
$\coclone{IN}{}{2}$, $\coclone{II}{}{}$, $\coclone{II}{}{0}$,
$\coclone{II}{}{1}$ and $\coclone{BR}{}{}$, the result follows
immediately from Theorem \ref{theorem:ccols_weak_base}, the minimal
core-sizes for each co-clone, and a suitable rearrangement of
arguments. Through exhaustive search, i.e. by repeatedly removing
redundant columns and tuples, one can verify that the bases are
also minimal. This has been done by a computer program which is
available upon request from the author. For the remaining co-clones
the proof is divided into two parts. First, we prove that the weak
base for every co-clone I$\clone{C}$ in $\coclone{IM}{}{0}$,
$\coclone{IM}{}{1}$, $\coclone{IM}{}{2}$, $\coclone{ID}{}{2}$,
$\coclone{IV}{}{1}$, $\coclone{IV}{}{2}$, $\coclone{IE}{}{0}$ and
$\coclone{IE}{}{2}$, can be obtained by collapsing columns from
$\clone{C}$(COLS$^s$). Second, we prove that for every $n \geq 2$
there exists simple weak bases for the co-clones $\coclone{IS}{n}{0}$,
$\coclone{IS}{n}{02}$, $\coclone{IS}{n}{01}$, $\coclone{IS}{n}{00}$
and their duals $\coclone{IS}{n}{1}$, $\coclone{IS}{n}{12}$,
$\coclone{IS}{n}{11}$, $\coclone{IS}{n}{10}$.

To make the proofs more concise we introduce some admissible operations
on relations which preserves the weak base property. Let $R$ be an
$n$-ary relation. Each rule is of the form $R \deducesto{} R'$ and
implies that $\pcclone{R'} \subseteq \pcclone{R}$.

\begin{itemize}
\item $R \deducesto{(i = j)} R'$, $1 \leq i < j \leq n$,
\\ (Identify argument $i$ with argument $j$),
\item
  $R \deducesto{\pi(i_1, \ldots, i_n)} R'$, where $\pi$ is the
  permutation $\pi(j) = i_j, 1 \leq
  j \leq n$, $1 \leq i_j \leq n$, \\
  (Swap arguments),
\item
  $R \deducesto{\mathrm{irr}} R'$, 
  \\  ($R'$ is the irredundant core of $R$).
\end{itemize}


\begin{lemma}
  Let I$\clone{C}$ be a co-clone, $R$ an $n$-ary weak base for I$\clone{C}$, and let $R'$ be a Boolean
  relation such that $R \deducesto{*} R'$ for some rule $\deducesto{*}$. If $R'$ is a base
  for I$\clone{C}$ then it is also a weak base for I$\clone{C}$.

\end{lemma}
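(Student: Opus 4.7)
The plan is to establish $\pcclone{R} = \pcclone{R'}$ and then transfer the weak base property from $R$ to $R'$ via the Galois connection for strong partial clones and weak partial co-clones (the second theorem of the preliminaries). Once the equality of weak partial co-clones is in hand, that theorem gives $\mathrm{pPol}(R') = \mathrm{pPol}(R) = \clone{I}_{\cup}(\clone{C})$, which is exactly the definition of $R'$ being a weak base of $\mathrm{I}\clone{C}$.

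First I would verify the inclusion $\pcclone{R'} \subseteq \pcclone{R}$. This is built into the definition of the three admissible rules announced in the bulleted list immediately above the lemma: in every case $R'$ admits a q.p.p.\ implementation from $R$ without using equality. Argument identification $R \deducesto{(i=j)} R'$ is realised by plugging the same variable into positions $i$ and $j$ of $R$; argument permutation simply reorders the argument list of $R$; and the irredundant-core operation $R \deducesto{\mathrm{irr}} R'$ removes redundant positions, which is again realised by repeated variable substitution in $R$. Hence $R' \in \pcclone{R}$ in every case, and therefore $\pcclone{R'} \subseteq \pcclone{R}$.

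The reverse inclusion $\pcclone{R} \subseteq \pcclone{R'}$ is exactly where the hypothesis that $R'$ is a base enters. Since $R$ is by assumption a weak base of $\mathrm{I}\clone{C}$ and $R'$ is, by hypothesis, a base of $\mathrm{I}\clone{C}$, Theorem \ref{theorem:weak_base} applies and yields $R \in \pcclone{R'}$, so $\pcclone{R} \subseteq \pcclone{R'}$. Combining the two inclusions gives $\pcclone{R} = \pcclone{R'}$, and the Galois connection finishes the argument as outlined above.

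There is no substantive obstacle here: the whole lemma is essentially a direct corollary of Theorem \ref{theorem:weak_base}, observing that promoting a base obtained from a weak base by an equality-free q.p.p.\ reduction back to a weak base is automatic, provided no co-clone content is lost along the reduction. The only care needed is to check that each of the three listed rules does indeed produce a q.p.p.\ definition without equality, which is immediate by inspection of their definitions.
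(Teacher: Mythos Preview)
Your proposal is correct and follows essentially the same route as the paper: both arguments establish $\pcclone{R} = \pcclone{R'}$, obtaining $\pcclone{R'} \subseteq \pcclone{R}$ from the fact that each rule yields a q.p.p.\ definition of $R'$ over $R$, and the reverse inclusion from $R$ being a weak base and $R'$ a base (the paper phrases this as ``obvious since $R$ is a weak base,'' which is exactly your appeal to Theorem~\ref{theorem:weak_base}). The only cosmetic difference is that the paper concludes via $\clone{I}_{\cup}(\mathrm{Pol}(R)) = \clone{I}_{\cup}(\mathrm{Pol}(R'))$ whereas you invoke the Galois connection to get $\mathrm{pPol}(R) = \mathrm{pPol}(R')$ directly; these are equivalent.
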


\begin{proof}
  We prove that $\pcclone{R} = \pcclone{R'}$ which implies that
  $\clone{I}_{\cup}(\mathrm{Pol}(R)) =
  \clone{I}_{\cup}(\mathrm{Pol}(R'))$ and that $R'$ is a weak
  base for I$\clone{C}$. The first inclusion $\pcclone{R} \subseteq \pcclone{R'}$ is
  obvious since $R$ is a weak base by assumption. To prove that
  $\pcclone{R'} \subseteq \pcclone{R}$ we show that $R' \in
  \pcclone{R}$ by giving a q.p.p.\ implementation of $R'$ with
  $R$. There are two cases to consider. Either $R \deducesto{(i = j)}
  R'$, $1 \leq i < j \leq n$, in which case $R'$ is the ($n-1$)-ary
  relation defined as $R'(x_1, \ldots, x_i, \ldots,
  x_{n-1}) \equiv R(x_1, \ldots, \underbrace{x_i, \ldots,
    x_i}_{\mbox{\footnotesize j - i + 1}}, \ldots, x_{n-1})$, or $R
  \deducesto{\pi(i_1, \ldots, i_n)} R'$, in which case $R'(x_1, \ldots,
  x_n) \equiv R(x_{\pi(1)}, \ldots, x_{\pi(n)})$. The case when $R'$
  is the irredundant core of $R$ follows trivially from this since
  $R'$ can be obtained by identifying all variables that are equal.
\end{proof}

\begin{lemma}
  The bases for $\coclone{IM}{}{0}$,
  $\coclone{IM}{}{1}$, $\coclone{IM}{}{2}$, $\coclone{ID}{}{2}$,
  $\coclone{IV}{}{1}$, $\coclone{IV}{}{2}$, $\coclone{IE}{}{0}$ and
  $\coclone{IE}{}{2}$ in Table \ref{table:weak_bases} are minimal weak bases.
\end{lemma}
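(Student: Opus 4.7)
The strategy combines Theorem \ref{theorem:ccols_weak_base} with the admissible-operations lemma just proved. For each of the eight listed co-clones I$\clone{C}$ with minimal core-size $s$, Theorem \ref{theorem:ccols_weak_base} provides the relation $\clone{C}(\mathrm{COLS}^s)$ as an (exponentially-large) weak base; the preceding lemma then guarantees that any sequence of column identifications $\deducesto{(i=j)}$, permutations $\deducesto{\pi}$, or irredundant-core reductions $\deducesto{\mathrm{irr}}$ that lands on a base of I$\clone{C}$ yields another weak base. So the plan reduces to (i) exhibiting such a sequence from $\clone{C}(\mathrm{COLS}^s)$ to the compact relation listed in Table \ref{table:weak_bases}, and (ii) verifying that the endpoint is indeed a base of I$\clone{C}$.

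Proceeding case by case, I would write $\mathrm{COLS}^s$ as the $2^s$-column matrix whose columns are $0, 1, \ldots, 2^s-1$ in binary, and then compute its closure under $\clone{C}$. For instance $\clone{M_0}(\mathrm{COLS}^2)$, after permuting the constant-$0$ column to the front, collapses under a single column identification to $(x_1 \rightarrow x_2) \wedge \F(c_0)$; the cases $\coclone{IM}{}{1}$ and $\coclone{IM}{}{2}$ are analogous with $s=2$ and $s=3$, the only difference being that one or both constant columns are isolated. The co-clones $\coclone{IV}{}{1}$, $\coclone{IV}{}{2}$, $\coclone{IE}{}{0}$ and $\coclone{IE}{}{2}$ follow the same pattern: the $\clone{V}$- or $\clone{E}$-closure of $\mathrm{COLS}^s$ inherits a disjunction or conjunction skeleton which, after identifying the appropriate columns, matches the relation listed. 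The case $\coclone{ID}{}{2}$ requires slightly more care, because $\clone{D_2}$-closure pairs every tuple with its complement, which is precisely what the $\orn{2 \neq}{2}$ gadget encodes. In every case I would confirm that the endpoint is a base of the target co-clone by comparison with the bases of B\"ohler et al.~\cite{bsrv05}, at which point the preceding lemma yields the weak-base property.

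Minimality would then be established by the exhaustive search already used for the easier co-clones at the top of the table: for each column and each row attempt to delete it and check whether the resulting relation still generates the same co-clone; if no single deletion succeeds, the relation is minimal, and this can be delegated to the author's computer program.

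The main obstacle is the bookkeeping for the larger clones, where $\clone{C}(\mathrm{COLS}^s)$ can have many tuples and the identifications must be ordered so as to eliminate redundancy while preserving the constant columns $c_0, c_1$ and, in the $\coclone{ID}{}{2}$ case, the dual-pair structure that distinguishes I$\clone{D_2}$ from its weaker neighbours. A mis-ordered identification can easily enlarge Pol and so overshoot the intended co-clone, which is the point at which verifying against the bases of B\"ohler et al.~is indispensable.
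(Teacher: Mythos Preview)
Your proposal is correct and follows essentially the same route as the paper: start from $\clone{C}(\mathrm{COLS}^s)$, apply a chain of the admissible operations $\deducesto{(i=j)}$, $\deducesto{\pi}$, $\deducesto{\mathrm{irr}}$, invoke the preceding lemma once the endpoint is checked to be a base, and verify minimality by exhaustive search. The paper simply makes the derivation chains explicit (one line per co-clone) and works out the $\coclone{IE}{}{2}$ minimality check by hand as an illustration; one small slip in your sketch is that minimality, as defined here, concerns only removal of tuples (rows), not columns.
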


\begin{proof} 
  We consider each case in turn. For every co-clone I$\clone{C}$ we
  write $R_{\mathrm{I}\clone{C}}$ for the weak base from Table
  \ref{table:weak_bases}, and $R$, $R'$, $\ldots$, for intermediate
  relations in the derivation.
  \\
  \noindent
  $\coclone{IR}{}{0}$: $\coclone{R}{}{0}(\mathrm{COLS}^1) \deducesto{(1 = 2)} R \deducesto{\mathrm{irr}} R_{\coclone{IR}{}{0}}$.
  \\
  \noindent
  $\coclone{IR}{}{1}$: $\coclone{R}{}{1} (\mathrm{COLS}^1)
  \deducesto{(1 = 2)} R \deducesto{\mathrm{irr}} R_{\coclone{IR}{}{1}}$.
  \\
  \noindent
  $\coclone{IM}{}{0}$: $\coclone{M}{}{0}(\mathrm{COLS}^2) \deducesto{(1 = 2)} R \deducesto{\mathrm{irr}} R' \deducesto{\pi(3, 1, 2)} R_{\coclone{IM}{}{0}}$.
  \\
  $\coclone{IM}{}{1}$: $\coclone{M}{}{1}(\mathrm{COLS}^2)
  \deducesto{(1 = 2)} R \deducesto{\mathrm{irr}} R_{\coclone{IM}{}{1}}$.
  \\
  $\coclone{IM}{}{2}$: $\coclone{M}{}{2}(\mathrm{COLS}^3)
  \deducesto{(1 = 2)} R \deducesto{(1 = 2)} R' \deducesto{(2 =
    3)} R'' \deducesto{\mathrm{irr}}
  R''' \deducesto{\pi(3, 1, 2, 4)} R_{\coclone{IM}{}{2}}$.
  \\
  $\coclone{ID}{}{2}$: $\coclone{D}{}{2}(\mathrm{COLS}^3). 
  \deducesto{(1 = 2)} R \deducesto{\mathrm{irr}} R' \deducesto{\pi(5, 4, 1, 3, 2, 6)} R_{\coclone{ID}{}{2}}$.
  \\
  $\coclone{IV}{}{1}$: $\coclone{V}{}{1}(\mathrm{COLS}^3)
  \deducesto{(4 = 8)} R \deducesto{(2 = 4)} R' \deducesto{(3 = 6)}
  R'' \deducesto{\mathrm{irr}} R'''
  \deducesto{\pi(4,2,3,1,5)} R_{\coclone{IV}{}{1}}$.
  \\
  $\coclone{IV}{}{2}$: $\coclone{V}{}{2}(\mathrm{COLS}^3) 
  \deducesto{(4 = 8)} R \deducesto{(2 = 4)} R' \deducesto{(3 = 6)}
  R'' \deducesto{\mathrm{irr}} R'''
  \deducesto{\pi(4,2,3,1,5)} R_{\coclone{IV}{}{2}}$.
  \\
  $\coclone{IE}{}{0}$: $\coclone{E}{}{0}(\mathrm{COLS}^3)
  \deducesto{(1 = 2)} R \deducesto{(1 = 2)} R' \deducesto{(1 = 2)}
  R'' \deducesto{\mathrm{irr}} R'''
  \deducesto{\pi(5,1,2,3,4)} R_{\coclone{IE}{}{0}}$.
  \\
  $\coclone{IE}{}{2}$: $\coclone{E}{}{2}(\mathrm{COLS}^3)
  \deducesto{(1 = 2)} R \deducesto{(1 = 2)} R' \deducesto{(1 = 2)}
  R'' \deducesto{\mathrm{irr}} R''' \deducesto{\pi(4,1,2,3,5)}
  R_{\coclone{IE}{}{2}}$.
  \\
  \\
  \noindent
  It is not hard to see that every relation $R_{\mathrm{I}\clone{C}}$ is
  a base of $\mathrm{I}\clone{C}$. As in the previous cases the
  minimality of each weak base can be verified through exhaustive
  search. As an example consider
  \[R_{\coclone{IE}{}{2}} = 
  \begin{pmatrix} 
     0 & 0 & 0 & 0 & 1 \\
     0 & 0 & 1 & 0 & 1 \\
     0 & 1 & 0 & 0 & 1 \\
     1 & 1 & 1 & 0 & 1
  \end{pmatrix}. \] Removing three rows results in a relation in
  $\coclone{IR}{}{2}$ while removing two rows from
  $R_{\coclone{IE}{}{2}}$ results in a relation in
  $\coclone{ID}{}{1}$. Removing the first row results in a relation
  which generates $\coclone{BR}{}{}$ and is hence no longer included
  in $\coclone{IE}{}{2}$, removing the second or third row gives a
  relation in $\coclone{IM}{}{2}$, and removing the fourth row gives a
  relation in $\coclone{IS}{2}{10}$. Hence there is no relation $R'
  \subset R_{\coclone{IE}{}{2}}$ such that $\cclone{R'} =
  \coclone{IE}{}{2}$ by which it follows that $R_{\coclone{IE}{}{2}}$
  is a minimal weak base. 
 \end{proof}

We now turn our attention towards the infinite parts of Post's
lattice. In the sequel we sometimes represent relations by formulas in
conjunctive normal form. If $\mathbf{x} = x_1, \ldots, x_n$ we use
$\phi(\mathbf{x})$ to denote a formula with $n$ free variables. 
If $\phi = C_1 \wedge \ldots \wedge C_m$ is a formula with $m$
clauses we say that $C_i$ is a {\em prime implicate} of $\phi$ if
$\phi$ does not entail any proper subclause of $C_i$. A formula $\phi$
is said to be {\em prime} if all of its clauses are prime
implicates. Obviously any finite Boolean relation is
representable by a prime formula. If $R$ is an $n$-ary Boolean
relation we can therefore prove that $R \in \langle \Gamma
\rangle_{\not \exists}$ by showing that $R(x_1, \ldots, x_n)$ can be
expressed as a conjunction $\phi_1(\mathbf{y_1}) \wedge \ldots \wedge
\phi_k(\mathbf{y_k})$, where each $\mathbf{y_i}$ is a vector over
$x_1, \ldots, x_n$ and each $\phi_i$ is a prime formula representation
of a relation in $\Gamma$. This is advantageous since relations in
$\coclone{IS}{n}{0}$, $\coclone{IS}{n}{02}$, $\coclone{IS}{n}{01}$,
$\coclone{IS}{n}{00}$, $\coclone{IS}{n}{1}$, $\coclone{IS}{n}{12}$,
$\coclone{IS}{n}{11}$ and $\coclone{IS}{n}{10}$ are representable by
prime {\em implicative hitting set-bounded} (IHSB)
formulas~\cite{creignou2008}. We let IHSB$+^n$ be the set of formulas
of the form $(x_1 \vee \ldots \vee x_m), 1 \leq m \leq n, (\neg x_1), (\neg x_1
\vee x_2)$, and dually for IHSB$-^n$. To avoid repetition we only present the
full proof for $\coclone{IS}{n}{00}$. The proofs for the other cases
follow through similar arguments.

\begin{lemma} \label{theorem:is00}
  The relation $R_{\coclone{IS}{n}{00}}(x_1, \ldots, x_n, x, c_0, c_1)$ $\equiv \Or(x_1, \ldots, x_n)
  \wedge (x \rightarrow x_1\cdots x_n)
  \wedge \F(c_0) \wedge \T(c_1)$ is a minimal weak base for $\coclone{IS}{n}{00}$.
\end{lemma}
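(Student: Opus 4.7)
The plan proceeds in three stages: verify that $R_{\coclone{IS}{n}{00}}$ is a base of $\coclone{IS}{n}{00}$; promote this to a weak base via Theorem~\ref{theorem:weak_base3}; and finally check minimality. For the base property, I would compute $\mathrm{Pol}(R_{\coclone{IS}{n}{00}})$ directly from the four conjuncts. The $\F(c_0)$ and $\T(c_1)$ conjuncts enforce $0$- and $1$-reproducing polymorphisms; the $n$-ary $\Or$ pins down preservation by the $n$-separating clone; and the implication $x \rightarrow x_1 \cdots x_n$, which expands to $(\neg x \vee x_1) \wedge \cdots \wedge (\neg x \vee x_n)$, supplies the remaining monotonicity constraints that separate $\coclone{IS}{n}{00}$ from $\coclone{IS}{n}{02}$. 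The overall conjunction is essentially the plain base of $\coclone{IS}{n}{00}$ established by Creignou et al.~\cite{creignou2008}, from which the base property follows.

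For the weak-base step I would use Theorem~\ref{theorem:weak_base3}: it suffices to show that for any base $\Gamma$ of $\coclone{IS}{n}{00}$ the relation $R_{\coclone{IS}{n}{00}}$ admits a quantifier-free, equality-free p.p.\ definition over $\Gamma$. Every relation in $\coclone{IS}{n}{00}$ has a unique prime representation as a conjunction of IHSB+ clauses, i.e.\ clauses of the forms $x_{i_1} \vee \cdots \vee x_{i_m}$ with $m \leq n$, $\neg x \vee y$, $(x)$, and $(\neg x)$. The key claim is that each clause type appearing in $R_{\coclone{IS}{n}{00}}$ must occur as a prime implicate of some relation in $\Gamma$: for instance, if no $R \in \Gamma$ had an $n$-ary disjunctive prime implicate then every $R \in \Gamma$ would lie in $\coclone{IS}{n-1}{00}$, giving $\cclone{\Gamma} \subseteq \coclone{IS}{n-1}{00} \subsetneq \coclone{IS}{n}{00}$ and contradicting $\cclone{\Gamma} = \coclone{IS}{n}{00}$. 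Analogous collapse arguments (appealing to the strict inclusions in the $\coclone{IS}{n}{0\alpha}$ chain) force the presence of an implication clause and of both constant implicates. Having located each required implicate inside some $R \in \Gamma$, I would read off $R_{\coclone{IS}{n}{00}}$ by substituting the unused coordinates of each selected $R$ with the available constants $c_0, c_1$, obtaining the desired q.p.p.\ definition without equality.

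Minimality reduces to a finite case analysis. Dropping $\F(c_0)$ or $\T(c_1)$ escapes into a strictly larger reproducing co-clone; dropping $x$ together with its implications yields $R_{\coclone{IS}{n}{02}}$; and dropping any $x_i$ shortens the $n$-ary disjunction, so the resulting relation generates only $\coclone{IS}{n-1}{00} \subsetneq \coclone{IS}{n}{00}$. Row removals are handled analogously, with each deletion landing the generated co-clone inside $\coclone{IM}{}{2}$, $\coclone{IS}{n}{02}$, $\coclone{IS}{n}{10}$, or a pure constant relation. The main obstacle is the weak-base step, specifically the guarantee that the required prime implicates genuinely surface in relations of an arbitrary base $\Gamma$: Schnoor and Schnoor's generic $\clone{C}(\mathrm{COLS}^s)$ template would here yield a relation of arity $2^{\max(3,n)}$, so to obtain a compact weak base of linear arity one must exploit the infinite-chain structure of $\coclone{IS}{n}{00}$ together with the prime IHSB+ normal form rather than the exponential default construction.
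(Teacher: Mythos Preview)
Your high-level strategy coincides with the paper's: show that $R_{\coclone{IS}{n}{00}}$ is a base, then prove it is q.p.p.-definable from every base $\Gamma$ of $\coclone{IS}{n}{00}$ by exploiting the IHSB$+^n$ prime normal form together with collapse arguments along the $\coclone{IS}{n}{0\alpha}$ chain, and finally check minimality. The paper proceeds in exactly this way, working with the single relation $R$ obtained as the Cartesian product of $\Gamma$.

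The substantive gap is in your substitution step. You claim that, having located each required prime implicate inside some $R \in \Gamma$, you can ``read off $R_{\coclone{IS}{n}{00}}$ by substituting the unused coordinates \ldots\ with the available constants $c_0,c_1$''. This does not go through as stated: an unused coordinate $y_j$ can occur in several implicative clauses of the prime form of $R$, and a constant assignment may propagate into an unwanted constraint among the $x_i$ (e.g.\ forcing some $x_i$ to $1$). The paper handles this by a clause-by-clause case analysis, using primeness of the $n$-ary disjunction to exclude chains of the form $(\neg x_i \vee y_j)\wedge(\neg y_j \vee x_{i'})$, and in some cases it assigns unused coordinates to $x$ or to one of the $x_i$ rather than to a constant. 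Without this analysis the substitution can fail, so the primeness argument is doing real work that your sketch omits. Two smaller points: minimality in the paper's sense concerns proper \emph{tuple}-subsets (and irredundancy of rows), so your discussion of dropping the columns $c_0$, $c_1$, $x$, $x_i$ is not addressing the right notion; only the row-removal part is relevant, and that should be carried out rather than dismissed as ``analogous''. Also, Theorem~\ref{theorem:weak_base3} states a consequence of being a minimal weak base, not a criterion for establishing one; the direction you actually need is the converse implicit in Theorem~\ref{theorem:weak_base} and the definition of $\clone{I}_\cap(\mathrm{I}\clone{C})$.
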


\begin{proof}
    Let $\Gamma$ be a constraint language such that $\langle \Gamma
    \rangle = \coclone{IS}{n}{00}$. Since $\Gamma$ is finite we can
    without loss of generality restrict the proof to a single relation $R$ defined to be the
    cartesian product of all relations in $\Gamma$. We must prove that
    $R_{\coclone{IS}{n}{00}} \in \langle R \rangle_{\not \exists}$. By
    Creignou et al.$\,$\cite{creignou2008} we know that R can be
    expressed as an IHSB$+^n$ formula $\phi(y_1,\ldots, y_m)$.

    We first implement $\F(c_0)$ by identifying every
    variable $y_i$ occurring in a negative clause $(\neg y_i)$ to
    $c_0$. Note that there must exist at least one negative unary
    clause since otherwise $\langle R \rangle =
    \coclone{IS}{n}{01}$. Then, for any implicative clause $(\neg y_i
    \vee c_0)$ which also entails $(\neg c_0 \vee y_i)$ we identify $y_i$
    with $c_0$. For any remaining clause we identify all unbound
    variable with $c_1$. Since there must exist at least one positive
    prime clause this correctly implements $\T(c_1)$.

    Since $\langle R \rangle = \coclone{IS}{n}{00}$ there is at least
    one $n$-ary prime clause of the form $(y_1 \vee \ldots \vee y_n)$
    in $\phi$. We can therefore implement $\Or(x_1, \ldots, x_n)$ with
    $\phi(y_1,\ldots,y_m)$ by first identifying $y_1, \ldots, y_n$ and
    $x_1, \ldots, x_n$. Let the resulting formula be $\phi'$. Note
    that $\phi'$ might still contain unbound variables. In the
    subsequent formula we use $x_i$, $1 \leq i \leq n$, to denote a
    variable in $x_1, \ldots, x_n$ and $y_j$, $n + 1 \leq j \leq m$,
    to denote a variable in $y_{n+1}, \ldots, y_m$. Hence we need to
    replace each $y_j$ still occurring in $\phi'$ with $x_i$, $c_0$,$c_1$ or
    $x$. For every implicative clause $C$ in $\phi'$ there are four
    cases to consider:

  
  \begin{enumerate}
  \item
    $C = (\neg x_i \vee x_i')$
  \item
    $C = (\neg x_i \vee y_j)$
  \item
    $C = (\neg y_j \vee y_j')$
  \item
    $C = (\neg y_j \vee x_i)$
  \end{enumerate}

  The first case is impossible since $(x_1 \vee \ldots \vee x_n)$ was
  assumed to be prime. This also implies that the clauses $(\neg x_i
  \vee y_j)$ and $(\neg y_j \vee x_i')$ cannot occur simultaneously in
  the formula. For the second case we identify $y_j$ with $c_1$. For
  the third case we identify both $y_j$ and $y_j'$ with $x$. For the
  fourth case we identify $y_j$ with $c_0$. As can be verified the
  resulting formula implements $\Or(x_1, \ldots, x_n)$. 

  In order to implement $(\neg x \vee x_1 \wedge \ldots \wedge x_n)$ we
  need to ensure that $\neg x \vee x_i$ for all $1 \leq i \leq
  n$. Since $\langle R \rangle = \coclone{IS}{n}{00}$ its prime
  formula representation $\phi$ must contain a prime clause of the
  form $(\neg y_j \vee y_j')$ where $\phi$ does not entail $(\neg y_j' \vee
  y_j)$. To implement $(\neg x \vee x_i)$ we therefore identify $y_j$
  with $x$ and $y_j'$ with $x_i$. In the subsequent formula there are
  three implicative cases to consider:

  \begin{enumerate}
  \item
    $C = (\neg x \vee y_j)$
  \item
    $C = (\neg x_i \vee y_j)$
  \item
    $C = (\neg y_j \vee x_i)$
  \end{enumerate}

  In the first case we identify $y_j$ with $x_i$, in the second case
  we identify $y_j$ with $c_1$, and in the third case we identify
  $y_j$ with $x$. For any remaining positive clause we identify each
  unbound variable to $c_1$, and for any remaining negative unary
  clause $(\neg y_j)$ we identify $y_j$ with $c_0$. If we repeat the
  procedure for all $1 \leq i \leq n$ we see that $(\neg x \vee x_1
  \wedge \ldots \wedge x_n)$. All resulting formulas now only contain
  variables from $x_1, \ldots, x_n$, $x$, $c_0$, $c_1$, and hence the
  implementation is indeed a q.p.p.\ implementation. 

  One can also prove that $R_{\coclone{IS}{n}{00}}$ is a base of
  $\coclone{IS}{n}{00}$ by giving an explicit p.p.\ definition of the
  base given by B\"ohler et al.\cite{bsrv05}.  As for
  the minimality we simply note that removing any tuple from
  $R_{\coclone{IS}{n}{00}}$ results in a relation which is no longer
  a base of $\coclone{IS}{n}{00}$.
\end{proof}

Due to the duality of $\coclone{IS}{n}{0}, \coclone{IS}{n}{02},
\coclone{IS}{n}{01}, \coclone{IS}{n}{00}$ with $\coclone{IS}{n}{1},
\coclone{IS}{n}{12}, \coclone{IS}{n}{11}, \coclone{IS}{n}{10}$ we
skip the latter proofs and instead refer to Lemma
\ref{theorem:is00}. We have thus proved the main result of the paper.

\begin{theorem}
  The relations in Table \ref{table:weak_bases} are minimal weak bases.
\end{theorem}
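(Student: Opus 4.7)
The plan is to derive the theorem as a direct synthesis of the preceding lemmas, organising Table \ref{table:weak_bases} into the three-way partition laid out immediately after its statement. For each row I need to verify two things — that the tabulated relation is a weak base of the stated co-clone, and that it is minimal among such bases — and the arguments cluster naturally according to how each relation was obtained.

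For the first group, consisting of the co-clones listed as following ``immediately from Theorem \ref{theorem:ccols_weak_base}'', I would simply observe that each tabulated relation equals $\clone{C}(\mathrm{COLS}^s)$ up to a permutation of arguments, where $s$ is the minimal core-size given in the table, so weak-basehood transfers instantly and minimality is certified by exhaustive search over sub-relations. For the second group — $\coclone{IM}{}{0}$, $\coclone{IM}{}{1}$, $\coclone{IM}{}{2}$, $\coclone{ID}{}{2}$, $\coclone{IV}{}{1}$, $\coclone{IV}{}{2}$, $\coclone{IE}{}{0}$, $\coclone{IE}{}{2}$ — I would invoke the derivation lemma showing that the admissible rules $\deducesto{(i = j)}$, $\deducesto{\pi(\cdot)}$, and $\deducesto{\mathrm{irr}}$ preserve the weak-base property whenever the output is still a base. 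The explicit chains of derivations from $\clone{C}(\mathrm{COLS}^s)$ to the tabulated relation are exhibited in the preceding lemma, and the worked minimality check carried out there for $R_{\coclone{IE}{}{2}}$ generalises case-by-case to the other seven relations.

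The main obstacle lies in the third group: the infinite families $\coclone{IS}{n}{0}$, $\coclone{IS}{n}{02}$, $\coclone{IS}{n}{01}$, $\coclone{IS}{n}{00}$ and their duals. Here Schnoor's COLS-based recipe would produce relations of arity $2^n$, growing exponentially in $n$, so the template of the first two groups cannot be reused. Instead I would bypass COLS altogether and exploit the prime IHSB representations from Creignou et al.~\cite{creignou2008}: for any relation $R$ generating one of these co-clones, its prime IHSB$+^n$ (respectively IHSB$-^n$) form is guaranteed to contain exactly the positive, negative, and implicative prime clauses needed to q.p.p.-define the tabulated weak base by successive identification of variables with $x_i$, $x$, $c_0$ or $c_1$. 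Lemma \ref{theorem:is00} executes this strategy in full for $\coclone{IS}{n}{00}$; the three weaker $+$ co-clones are handled by the same argument after omitting whichever ingredient is absent ($\F(c_0)$ for $\coclone{IS}{n}{01}$, the implicative clause for $\coclone{IS}{n}{02}$, both for $\coclone{IS}{n}{0}$), and the four $-$ duals follow from the very same scheme after swapping the roles of $0$ and $1$ and exchanging $\Or$ for $\Nand$. Minimality for this family is direct: removing any tuple from a tabulated relation destroys one of the prime clauses witnessing membership in the co-clone and therefore pushes the generated co-clone strictly downward in Post's lattice. Combining the three groups accounts for every row of Table \ref{table:weak_bases} and proves the theorem.
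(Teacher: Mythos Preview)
Your proposal is correct and mirrors the paper's own argument exactly: the theorem is stated as a synthesis of the preceding material, partitioned into the same three groups (direct application of Theorem~\ref{theorem:ccols_weak_base} with computer-checked minimality, the derivation-rule lemma for the eight sporadic cases, and Lemma~\ref{theorem:is00} plus duality for the infinite $\coclone{IS}{n}{*}$ chains). There is nothing to add.
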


\section{Conclusions and future work}
We have determined minimal weak bases for all Boolean co-clones with a
finite base. Below are some topics relevant for future pursuits.

\noindent
{\bf The lattice of strong partial clones.} Since the weak
and plain base of a co-clone I$C$ constitute the smallest and largest
weak partial co-clone occurring inside of I$\clone{C}$ it would be
interesting to determine the full interval of weak partial co-clones
between the weak base and the plain base. Especially one would like
to determine whether these intervals are finite, countably infinite or
equal to the continuum.

\noindent
{\bf Exact complexity of constraint problems.} Each weak base
effectively determines the constraint problem with the lowest
complexity in a given co-clone. Example applications which follows
from the categorization in this article include the easiest NP-complete Boolean
CSP($\cdot$) problem in Jonsson et al.$\,$\cite{Jonsson:etal:soda2013}
which is simply the weak base of $\coclone{BR}{}{}$ without
constant columns. Are there other problems besides Boolean CSP($\cdot$) which admits a
single easiest problem?

\section*{Acknowledgements}
The author is grateful towards Peter Jonsson, Gustav Nordh and Bruno
Zanuttini for helpful comments and suggestions.

\bibliography{references}
\bibliographystyle{plain}
\end{document}